\newcommand{\cB}{\mathcal{B}}
\newcommand{\cC}{\mathcal{C}}
\newcommand{\cX}{\mathcal{X}}
\definecolor{mygreen}{rgb}{0.0,0.6,0.0}
\newlength{\contentheight}
\newcommand{\ig}{\includegraphics}
\newcommand{\dir}[1]{\overrightarrow{#1}}
\newcommand{\ldir}[1]{\overleftarrow{#1}}
\newcommand{\bs}[1]{\backslash{#1}}
\newcommand{\Planar}{\textsc{Planar}}
\newcommand{\ThreeSAT}{\textsc{3-SAT}}
\newcommand{\PlanarThreeSAT}{\Planar~\ThreeSAT}
\newtheorem{problem}{Problem}
\newtheorem{theorem}{Theorem}[section]
\newtheorem{lemma}[theorem]{Lemma}
\newtheorem{proposition}[theorem]{Proposition}
\newtheorem{corollary}[theorem]{Corollary}
\theoremstyle{remark}
\renewenvironment{proof}{\medskip \par
\noindent \textbf{Proof.}
}{\hfill$\qed$\medskip}
\journal{Discrete Applied Math}
\begin{document}

\begin{frontmatter}
\title{Note about the complexity of the acyclic orientation with parity constraint problem }

\author[1,3]{Sylvain Gravier}
\ead{sylvain.gravier@univ-grenoble-alpes.fr}
\author[1,3]{Matthieu Petiteau \corref{cor1}}
\ead{matthieu.petiteau@univ-grenoble-alpes.fr}
\author[2,3]{Isabelle Sivignon}
\ead{isabelle.sivignon@univ-grenoble-alpes.fr}

\cortext[cor1]{Corresponding author}

\affiliation[1]{organization={Univ. Grenoble Alpes, CNRS, Institut Fourier},
            city={Grenoble},
            postcode={38000},
            country={France}}

\affiliation[2]{organization={Univ. Grenoble Alpes, CNRS, Grenoble INP, GIPSA-lab},
            city={Grenoble},
            postcode={38000},
            country={France}}

\affiliation[3]{organization={Univ. Grenoble Alpes, Maths à Modeler}, 
            city={Grenoble},
            postcode={38000},
            country={France}}

\begin{abstract}
\noindent Let $G=(V,E)$ be a connected graph, and let $T \subseteq V$ be a subset of vertices. An orientation of $G$ is called $T$-odd if any vertex $v \in V$ has odd in-degree if and only if it is in $T$. Finding a $T$-odd orientation of $G$ can be solved in polynomial time as shown by Chevalier, Jaeger, Payan and Xuong (1983). Since then, $T$-odd orientations have continued to attract interest, particularly in the context of global constraints on the orientation. For instance, Frank and Kir\'aly (2002)  investigated $k$-connected $T$-odd orientations and raised questions about acyclic $T$-odd orientations. This problem is now recognized as an Egres problem and is known as the ``Acyclic orientation with parity constraints'' problem. Szegedy (2005) proposed a randomized polynomial algorithm to address this problem. An easy consequence of his work provides a polynomial time algorithm for planar graphs whenever $|T| = |V|-1$. Nevertheless, it remains unknown whether it exists in general. In this paper we contribute to the understanding of the complexity of this problem by studying a more general one. We prove that finding a $T$-odd acyclic orientation on graphs having some directed edges is \NP-complete.
\end{abstract}
\end{frontmatter}

\begin{keyword}
parity orientation, acyclic orientation, planar graph
\end{keyword}

\section{Introduction} \label{sec:introduction}

\noindent The study of constrained graph orientations and parity related problems has received significant interest in recent years. A notable problem intersecting these two domains is the acyclic orientation with parity constraints problem, which has been the subject of various research efforts. The first constraint considered in this problem is a local one, that involves the parity of the in-degree of vertices in an orientation. For a graph $G = (V,E)$ and a subset $T \subseteq V$, an orientation of $G$ is called $T$-odd if any vertex $v \in V$ has odd in-degree if and only if it is in $T$. Chevalier, Jaeger, Payan and Xuong (1983) \cite{chevalierOddRootedOrientations1983} proved that a $T$-odd orientation of an undirected graph exists if and only if $|E(G)| + |T| \equiv 0$ and that finding one can be done in polynomial time.\newline

\noindent Further advancements were made by Frank and Kir\'aly (2002) \cite{frankGraphOrientationsEdgeconnection2002} when adding a global constraint on the orientation, the $k$-arc-connectivity ie. orientations such that any pair of vertices is connected by $k$ arc-disjoints paths. They gave a characterization of undirected multigraphs $G = (V, E)$ having a $k$-arc-connected $T$-odd orientation for every subset $T \subseteq V$ with $|E| + |T| \equiv_2 0$.
It also raised questions on other global constraints such as acyclicity which can be seen as an other kind of connectivity constraint. This leads to the current formulation of Problem \ref{prob:AOP}. 
\begin{problem}[\textsc{Acyclic Orientation with parity constraints}]\label{prob:AOP}\ \newline
	\textbf{Instance:} A pair $(G,T)$ with $G=(V,E)$ an undirected graph and $T \subset V$ a subset of its vertex set.\newline
	\textbf{Question:} Does $G$ have an acyclic $T$-odd orientation?\newline
\end{problem}
\noindent This existence problem is referenced on the Egres platform\footnote{\url{https://lemon.cs.elte.hu/egres/open/Acyclic_orientation_with_parity_constraints}} and is known as the ``Acyclic orientation with parity constraints'' problem. It is easy to check it is in \NP. Indeed, given an orientation, one can verify in polynomial time whether it satisfies the in-degree constraints for each vertex and whether the orientation is acyclic. Still, it is not known whether Problem \ref{prob:AOP} is in \coNP. Note that no acyclic orientation exists if $G$ has loops. And if $G$ has multiple edges, they must be oriented the same way to preserve acyclicity and we could replace equivalently any odd multi-edge by a single edge and remove even multi-edges. Thus, we will focus on simple graphs.\newline

%\noindent Note also that the acyclicity constraint allows to reduce the natural generalization to multi-graphs. Indeed, every edge in a multi-edge must be directed the same way to preserve acyclicity and we could replace equivalently any odd multi-edge by a single edge and remove even multi-edges. \mat{We should also consider loopless graphs as loops would utterly make any orientation cyclic. Hence, all graphs considered throughout this paper are connected, simple and loopless.}\newline

\noindent In 2005, Szegedy \cite{szegedyApplicationsWeightedCombinatorial2005} gave a randomized polynomial algorithm to decide whether a graph has an acyclic $T$-odd orientation for a given $T$. A key step is reducing to the case $|V\setminus T| = 1$. 
% We will say that a graph $G$ has an acyclic \textbf{odd orientation} if it has an $V\setminus \{v\}$-odd orientation for some $v\in V$. 
% We call an acyclic odd orientation \textbf{v-rooted} if its unique source is $v$.

% \begin{proposition}[2.11.4 in \cite{szegedyApplicationsWeightedCombinatorial2005}]\label{prop:reduceToApex}
% 	If $G$ has an acyclic odd orientation, then it has a $v$-rooted acyclic odd orientation for any $v \in V$.
% \end{proposition}

\begin{proposition}[2.11.5 in \cite{szegedyApplicationsWeightedCombinatorial2005}]\label{lem:reduceToApex}
	Let $(G, T)$ be an instance of Problem \ref{prob:AOP} \newline
	$G$ has an acyclic $T$-odd orientation if and only if $G'=(V\cup\{v\}, E\cup\{(v,u), u\in V\setminus T\})$ has an acyclic $V\setminus \{v\}$-odd orientation for some vertex $v \in V$.
\end{proposition}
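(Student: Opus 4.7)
The plan is to prove the two directions of the equivalence by explicit construction, the forward implication being a direct extension and the backward one relying on a single observation about sources in acyclic orientations.

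For the forward direction, assume $G$ admits an acyclic $T$-odd orientation $D$. I would build an orientation $D'$ of $G'$ by keeping $D$ on the edges of $E$ and orienting every new edge $\{v,u\}$, for $u \in V \setminus T$, as $v \to u$. Then $D'$ is acyclic because $v$ has in-degree $0$ in $D'$, so no cycle of $D'$ can pass through $v$, and removing $v$ leaves the acyclic orientation $D$. The parity conditions are immediate to verify: each $u \in T$ keeps the odd in-degree it had in $D$; each $u \in V \setminus T$ gains exactly one incoming arc (from $v$), flipping its in-degree from even to odd; and $v$ has in-degree $0$, which is even. So $D'$ is the desired acyclic $(V \setminus \{v\})$-odd orientation of $G'$.

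For the backward direction, assume $G'$ admits an acyclic $(V\setminus\{v\})$-odd orientation $D'$. The key observation is that $v$ must be a source of $D'$: every vertex of $V \setminus \{v\}$ has odd, hence strictly positive, in-degree in $D'$, so no such vertex can be a source, yet any acyclic orientation has at least one source (the first vertex in any topological order). Hence every new edge $\{v,u\}$ is oriented $v \to u$ in $D'$. I would then take $D$ to be the restriction of $D'$ to $E$. $D$ is acyclic as a sub-orientation of an acyclic one, and the parity bookkeeping goes through cleanly: each $u \in T$ has no edge to $v$ in $G'$ and keeps its odd in-degree in $D$, while each $u \in V \setminus T$ loses exactly one incoming arc (the one from $v$) when passing from $D'$ to $D$, switching its parity from odd to even. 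Thus $D$ is an acyclic $T$-odd orientation of $G$.

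The only non-routine step is the source observation used in the backward direction; once that is in hand, the rest is direct parity computation. I do not anticipate any further obstacle, since both constructions are local at $v$ and preserve acyclicity by a simple argument (edges at $v$ form a source/sink).
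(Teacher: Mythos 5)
Your argument is correct: the forward direction (orient all new edges out of the apex, which flips the parity of exactly the vertices of $V\setminus T$ and keeps $v$ as a source) and the backward direction (in an acyclic orientation every vertex of $V$ has odd, hence positive, in-degree, so the unique possible source is $v$, forcing all apex edges outward and making the restriction to $E$ a $T$-odd acyclic orientation) together give a complete proof, with the right reading of the statement, namely that $v$ is a new vertex joined to all of $V\setminus T$, so that the target set $V\setminus\{v\}$ is just $V$. The paper itself does not prove this proposition but cites it from Szegedy (Proposition 2.11.5), so there is no in-paper argument to compare against; your source observation is exactly the standard justification, and no step is missing.
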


% \begin{theorem}[2.11.8 in \cite{szegedyApplicationsWeightedCombinatorial2005}] One can decide in randomized polynomial time whether a graph has an odd acyclic orientation.
% \end{theorem}

\noindent Hence, the study of acyclic $T$-odd orientations when $|V\setminus T| = 1$ is of a lot of interest, and following this work, Kir\'aly and Kisfaludi-Bak gave in 2012 \cite{kiralyDualCriticalGraphsNotes} an algorithm deciding in polynomial time if a graph admits an  acyclic $V\setminus \{v\}$-odd orientation for some vertex $v \in V$ when the graph is either planar or $3$-regular (see corollary 1.41 and 2.17 in \cite{kiralyDualCriticalGraphsNotes}). However, in a more general setting, when $|V\setminus T| > 1$, the construction of $G'$ in Proposition \ref{lem:reduceToApex} may not preserve properties like planarity (except if $V\setminus T$ is included in a face)  or $3$-regularity. So Problem \ref{prob:AOP} remains open even on planar or $3$-regular graphs.\newline

\noindent A basic greedy algorithmic approach to Problem \ref{prob:AOP} would be to build a $T$-odd orientation iteratively by directing edges one by one. Thus, a step would consist of choosing an edge to direct in a partially directed graph. This leads to the study of the following problem:

\begin{problem}[\textsc{Acyclic Orientation with parity constraints on partially directed graphs}]\label{prob:PDAOP}\ \newline
	\textbf{Instance:} A pair $(G,T)$ where $G$ is a graph having a subset $A$ of edges with a fixed orientation and $T \subset V$ a subset of its vertex set.\newline
	\textbf{Question:} Does $G$ have an acyclic $T$-odd orientation preserving the fixed orientation of the pre-directed edges in $A$?
\end{problem}

\noindent A polynomial time algorithm to solve this problem on trees and on graphs of maximum degree $2$ can be done as follows. For a tree $G = (V,E)$ and $T\subset V$, there exists a unique $T$-odd orientation if and only if $|E(G)| + |T| \equiv 0$ and it is indeed acyclic. It can be obtained in polynomial time by \cite{chevalierOddRootedOrientations1983} and it only remains to check if the pre-directed edges in $A$ match this unique orientation. Now, a graph of maximum degree $2$ is a disjunction of paths and cycles. Dealing with the cycles is very similar to trees, except that for a given set $T$ verifying $|E(G)| + |T| \equiv 0$, a cycle admits exactly two $T$-odd orientations, where one can be obtained from the other by flipping all the arcs. It only remains to check the acyclicity and consistency with the pre-directed edges.\newline
\noindent Our next result shows that classes of graphs on which Problem \ref{prob:PDAOP} can be solved in polynomial time cannot be much more complex:

\begin{theorem}\label{thm:main}
    Problem \ref{prob:PDAOP} is \NP-complete even when restricted to planar graphs of maximum degree $3$ and such that any vertex not in $T$ has degree $2$.
\end{theorem}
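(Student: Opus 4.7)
\textbf{Membership in $\NP$} is straightforward: given a candidate orientation of the undirected edges, one can check in polynomial time that it agrees with the pre-directed edges of $A$, that every vertex of $T$ has odd in-degree while every vertex of $V\setminus T$ has even in-degree, and that the resulting digraph is acyclic (via a topological sort).

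\textbf{The hard direction} is to reduce from a suitably restricted $\NP$-complete variant of \textsc{Sat} on planar, cubic incidence graphs (for instance $\PlanarThreeSAT$, or a planar cubic version of $1$-in-$3$-\textsc{Sat}). From a formula $\varphi$, the plan is to build in polynomial time a planar graph $G_\varphi$ of maximum degree $3$, a set $T$ such that every vertex outside $T$ has degree exactly $2$, and a partial orientation $A$, such that $(G_\varphi,T,A)$ is a yes-instance of Problem~\ref{prob:PDAOP} if and only if $\varphi$ is satisfiable.

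\textbf{The key structural observation} driving the gadget design is the following dichotomy imposed by the degree restrictions. A degree-$2$ vertex outside $T$ must have even in-degree, hence it is either a local source (both incident edges outgoing) or a local sink (both incident edges incoming). Along a path of such vertices, sources and sinks therefore alternate, so the path admits exactly two orientations, naturally encoding a Boolean value. A degree-$2$ vertex of $T$, on the other hand, has in-degree exactly $1$, so it behaves as a flow-through node and a path of such vertices carries a single, consistent direction. These two primitives give us wires; negation is obtained by shifting the parity of a wire by one unit (inserting or removing a single degree-$2$ non-$T$ vertex, or fixing the direction of a single pre-directed edge). Variable gadgets will be short cycles built from these primitives whose $T$-odd acyclic orientations are exactly two, corresponding to true and false, and out of which up to three wires may branch toward the clauses containing the variable. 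Clause gadgets will be centered on a degree-$3$ $T$-vertex, whose odd-in-degree constraint (in-degree $1$ or $3$) interacts with pre-directed edges and the acyclicity constraint to realize the SAT semantics.

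\textbf{The main obstacle} is the design of the clause gadget: a single degree-$3$ $T$-vertex admits both the in-degree-$1$ and the in-degree-$3$ configurations, which, on its own, does not coincide with a standard SAT clause. The crux of the reduction is therefore to produce a small planar subgraph, built only from degree-$3$ $T$-vertices and degree-$2$ vertices (of either type), in which a few cleverly placed pre-directed edges together with the acyclicity requirement rule out the unwanted configurations, leaving precisely the set of input phases that satisfy the clause. A secondary obstacle is global: one must glue the variable gadgets, the clause gadgets and the interconnecting wires in a planar embedding while keeping the maximum degree bounded by $3$ and the non-$T$ vertices of degree exactly $2$; since the incidence graph of $\varphi$ is itself planar and cubic, this can be organized by replacing each variable-clause incidence with a phase-calibrated wire routed along the embedding. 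Once the gadgets are verified locally, the correctness of the reduction follows from a direct case analysis: a satisfying assignment of $\varphi$ yields, gadget by gadget, a consistent $T$-odd acyclic extension of $A$, and conversely any such extension forces a consistent truth value at each variable gadget that satisfies every clause.
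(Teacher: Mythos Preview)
Your plan is correct in spirit and matches the paper's high-level strategy: reduce from \PlanarThreeSAT, replace variable and clause vertices of the planar incidence graph by planar gadgets of maximum degree~$3$, and route the connections along the given embedding. Your structural observations about degree-$2$ vertices (non-$T$ vertices must be local sources or sinks; $T$-vertices are flow-through) are also correct and are precisely what the paper exploits.

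However, the proposal is not a proof: it stops exactly where the real work begins. You explicitly flag the clause gadget as ``the main obstacle'' and describe what it must achieve, but you never construct it; the same is true of the variable gadget. These constructions are the entire technical content of the reduction, and without them nothing has been established. Moreover, your tentative idea of centering the clause gadget on a single degree-$3$ $T$-vertex with pre-directed edges is \emph{not} how the paper proceeds, and it is unclear it can be made to work under the stated degree restrictions. The paper's clause gadget contains no pre-directed edges at all: it is a $6$-cycle $W_C$ of degree-$3$ $T$-vertices with six pendant degree-$2$ vertices, and the key observation is that a $T$-odd orientation of this gadget is acyclic if and only if at least one pendant edge is directed into $W_C$, which encodes ``at least one literal is true''. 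All pre-directed arcs live in the variable gadget, where a $10$-vertex base block $M$ with four fixed arcs is replicated $d(x_i)$ times around a cycle; the fixed arcs force exactly two acyclic $T$-odd orientations of $\Gamma_{X_i}$, one with $\delta(X_i)=\delta^+(X_i)$ and one with $\delta(X_i)=\delta^-(X_i)$, giving the Boolean value and a uniform boundary. To turn your outline into a proof you must supply both gadgets and verify their local behaviour; as written, the crucial step is missing.
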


\noindent As a consequence of this theorem, a basic greedy algorithmic approach is unlikely to exist for this problem, unless $\P = \NP$. On an other hand, Problem \ref{prob:AOP} is still open even for graphs being both planar and $3$-regular, and this result emphasizes once more the interest of those classes for Problem \ref{prob:AOP} and could lead to further results.

\section{Basic definitions} \label{sec:preliminaries}
\noindent A \textbf{partially directed graph} $G$ consists of a non-empty set $V(G)$ of vertices, a set $E(G)$ of edges (unordered pairs of vertices), and a set $A(G)$ of arcs (ordered pairs of vertices).  An edge will be denoted independently $uv$ or $vu$ with $u, v\in V$. Whereas, an arc will be denoted $\dir{uv}$ (equivalently $\ldir{vu}$) or $\dir{vu}$ (equivalently $\ldir{uv}$) according to the ordering of the vertices $u$ and $v$. A graph $G$ is said to be undirected if $A(G)$ is empty, and directed if $E(G)$ is empty.\newline

\noindent Given a graph $G$ and a set $S\subset E(G) \cup A(G)$, an \textbf{orientation} $O$ of $S$ is a set of arcs such that $S\cap A(G)\subseteq O$ and for any edge $uv\in S \cap E(G)$ we have that either $\dir{uv} \in O$ or $\dir{vu} \in O$. Given a subset $S'\subseteq S$ and an orientation $O$ of $S$, we denote $O(S')$ the orientation of $S'$ defined by $O(S')\subseteq O$. An {\bf orientation $O$ of a partially directed graph $G$} is an orientation of $E(G)\cup A(G)$ and it will be abusively denoted by $O(G)$ instead of $O(E(G)\cup A(G))$.

\noindent A set of arcs $A$ is said to be \textbf{acyclic} if the subgraph induced by $A$ does not contain any directed cycle. Similarly, a partially directed graph $G$ is acyclic if its arc set $A(G)$ is acyclic.

\noindent Given $X \subset V$, we define $\delta^\circ(X)$ to be the set of edges having exactly one vertex in $X$ and $\delta^+(X)$, $\delta^-(X)$ respectively to be the sets of arcs directed outwards and inwards from $X$. Then we define the \textbf{boundary} of $X$ to be $\delta(X) \overset{def}{=} \delta^\circ(X) \cup \delta^+(X) \cup \delta^-(X)$. We say that $\delta(X)$ is {\bf uniform} if it does not contain any edge and its arcs are directed either all outwards or all inwards from $X$.  For $X,Y\subset V(G)$, we define also the set $\delta(X,Y) \overset{def}{=} \delta(X)\cap \delta(Y)$. For any vertex $v \in V(G)$, we define the \textbf{out-degree} $d^+(v) = |\delta^+(\{v\})|$, the \textbf{in-degree} $d^-(v) = |\delta^-(\{v\})|$ and the \textbf{edge-degree} $d^\circ(v) = |\delta^\circ(\{v\})|$. The \textbf{degree} $d(v)$ is $|\delta(\{v\})|$.\newline

\noindent We abusively extend those notations to subgraphs. Let $H$ be a subgraph of a graph $G$, the \textbf{boundary} of $H$ will be noted $\delta(H)$ instead of $\delta(V(H))$. Similarly, $\delta^\circ(H)$, $\delta^+(H)$, $\delta^-(H)$ denote $\delta^\circ(V(H))$, $\delta^+(V(H))$, $\delta^-(V(H))$ respectively and for two subgraphs $H_1,H_2$ of a graph $G$, $\delta(H_1,H_2)$ denotes $\delta(V(H_1),V(H_2))$. We will further need to consider subgraphs along with their boundary, thus, we define for any subgraph $H$ of $G$, the subgraph induced by the set of edges and arcs $\delta(H)\cup E(H)\cup A(H)$ noted as $\Gamma_H$.\newline

\noindent Let $I=(G,T)$ be an instance of Problem \ref{prob:AOP}. We note $G(I) = G$ and $T(I) = T$. For simplicity, we note also $V(I) = V(G), E(I) = E(G)$ and $A(I) = A(G)$.
For the figures, we will refer to vertices in $T(I)$ as \textbf{black} vertices and vertices in $V(I) \bs T(I)$ as \textbf{white} vertices. It will also prove useful in the proofs to work with a slight generalization of $T$-odd orientations: given a graph $G$, a subgraph $S$ and a subset $T \subset V(G)$, an orientation $O$ is $T$-\textbf{odd on }$S$ if vertices of $V(S)$ have odd in-degree in $O(G)$ if and only if they belong to $T$. Note that in this definition, no constraints is imposed on the in-degree of the vertices in $V(G)\backslash V(S)$.\newline

%\noindent A \textbf{planar embedding} of a graph $G$ is a drawing of $G$ in the Euclidean plane so that the vertices of $G$ correspond to distinct points of the plane; each edge of $G$ corresponds to a simple curve that connects the ends of the edge but meets no other vertices; and each point of intersection of two such simple curves is an end of both edges. We call a graph \textbf{planar} if it has a planar embedding. A \textbf{plane} graph consists of a graph $G$ together with a planar embedding of $G$. Such a planar embedding $G'$ of $G$ can itself be regarded as an abstract graph and, as such, is isomorphic to $G$.\newline

\noindent We will also use the definition of \textbf{planar embedding} and \textbf{planar graph} that can be found in the book of Mohar and Thomassen \cite{moharGraphsSurfaces2001}. Roughly speaking, a planar embedding is a drawing of a graph in the plane such that no edges cross each other. A graph that admits a planar embedding is called planar. \newline

\section{\NP-completeness on partially directed graphs} \label{sec:partiallyDirectedGraphs}

\noindent In this section, we establish the \NP-completeness of the problem of finding an acyclic $T$-odd orientation of a partially directed graph. To demonstrate this, we employ a reduction from the well-known \NP-complete problem \PlanarThreeSAT.

\begin{problem}[\ThreeSAT] \label{prob:3SAT}
	\ \\
	\textbf{Instance:} A pair $(\cX, \cC)$ where $\cX = \{x_1, x_2, \ldots, x_n\}$ is a set of boolean variables and $\cC = \{c^1, c^2, \ldots, c^m\}$ is a set of clauses. Each clause $c^j$ is a disjunction of exactly three distinct literals, where a literal is either a variable from $\cX$ or its negation. No clause contains both a variable and its negation (otherwise it is trivially satisfied). \newline
	\textbf{Question:} Is there an assignment of the variables in $\cX$ such that all clauses in $\cC$ are satisfied?
\end{problem}

\noindent For any \ThreeSAT\ instance, we can construct what is known as its incidence graph. Let $(\mathcal{X}, \mathcal{C})$ be a \ThreeSAT\ instance. Its \textbf{incidence graph} $\cB_{\cX, \cC}$ is a bipartite graph with vertices corresponding to the variables and clauses of $(\mathcal{X}, \mathcal{C})$. An edge exists between a variable vertex $x_i$ and a clause vertex $c^j$ if and only if $x_i$ appears in $c^j$. Thus, for any $j$, $d(c^j) = 3$, and for any $i$, $d(x_i)$ corresponds to the number of clauses containing the variable $x_i$.

\begin{problem}[\PlanarThreeSAT] \label{prob:planar3SAT}
	\ \\ 
	\textbf{Instance:} An instance $(\cX, \cC)$ of \ThreeSAT\ and a planar embedding of $\cB_{\cX, \cC}$.\newline
	\textbf{Question:} Is there an assignment of the variables in $\cX$ such that all clauses in $\cC$ are satisfied?
\end{problem}

\begin{theorem}[Lichtenstein 1982 \cite{lichtensteinPlanarFormulaeTheir1982}]\label{thm:P3SAT}
    \PlanarThreeSAT~is \NP–complete.
\end{theorem}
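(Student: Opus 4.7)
The plan is to reduce from the standard \ThreeSAT\ problem (which is \NP-complete by Cook--Levin), so membership in \NP\ is immediate from that of \ThreeSAT\ (a satisfying assignment is a polynomial certificate), and the work lies in \NP-hardness. Given an arbitrary \ThreeSAT\ instance $(\cX, \cC)$, I would first build its incidence graph $\cB_{\cX, \cC}$ and fix \emph{some} drawing of it in the plane, e.g.\ a straight-line drawing in general position. This drawing has a finite number $k$ of pairwise edge crossings, with $k$ polynomial in $|\cX|+|\cC|$. My task is then to replace every crossing by a planar sub-structure that keeps the satisfiability of the formula intact.

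The core of the argument is the construction of a \textbf{crossover gadget}. I would design a constant-size planar formula $\phi_{\mathrm{cross}}(a,b,a',b')$ on four distinguished literals such that (i) the incidence graph of $\phi_{\mathrm{cross}}$ is planar and can be drawn in a disk with the four literal vertices $a, b, a', b'$ appearing on the boundary of this disk in that cyclic order, and (ii) $\phi_{\mathrm{cross}}$ is satisfiable \emph{exactly} by the assignments with $a=a'$ and $b=b'$. To obtain $\phi_{\mathrm{cross}}$ one can start from a small planar ``equality gadget'' that enforces $a \leftrightarrow a'$ through a chain of auxiliary variables routed through the gadget's interior, then interleave a second such chain enforcing $b \leftrightarrow b'$ while keeping the whole drawing planar; the cyclic boundary order forces the two chains to topologically cross only inside the gadget, where they are resolved using auxiliary variables rather than an actual edge crossing. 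Any clauses that end up with fewer than three distinct literals after this construction can be padded using one or two extra dummy variables in the standard way so that the final formula is a legal \ThreeSAT\ instance.

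With the gadget in hand, the reduction is mechanical. For each of the $k$ crossings in the drawing of $\cB_{\cX,\cC}$, I cut the two crossing edges at the crossing point, introduce a fresh copy of $\phi_{\mathrm{cross}}$, and identify the four cut endpoints with $a, b, a', b'$ placed in the correct cyclic order around the gadget's disk. This yields a new formula $(\cX',\cC')$ whose incidence graph admits the resulting drawing, now free of crossings by construction, hence a valid \PlanarThreeSAT\ instance. Because each gadget forces $a=a'$ and $b=b'$ in any satisfying assignment, the values at the two ends of each original crossed edge agree, so the projection of a satisfying assignment of $(\cX',\cC')$ onto the original variables satisfies $(\cX,\cC)$; conversely, any satisfying assignment of $(\cX,\cC)$ can be extended to satisfy the gadgets. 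The construction uses $O(k)$ new variables and clauses and can be carried out in polynomial time.

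The main obstacle is producing the crossover gadget itself: balancing planarity, the required cyclic ordering of the four literals on the boundary, the restriction to clauses of exactly three distinct literals, and the precise semantic constraint that it equate opposite pairs without accidentally forcing any relation between $\{a,a'\}$ and $\{b,b'\}$. I would either exhibit an explicit such gadget (Lichtenstein's original construction does exactly this) or build one by composing simpler planar equality and negation gadgets, verifying planarity by explicitly drawing the incidence sub-graph inside a disk and checking the semantics by a finite case analysis over the $16$ assignments to $a,b,a',b'$.
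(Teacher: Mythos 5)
This theorem is not proved in the paper at all: it is quoted as a known result and attributed to Lichtenstein's 1982 paper, which is exactly the source whose argument you are reconstructing. So the right comparison is with that cited proof, and your strategy matches it: membership in \NP\ is inherited from \ThreeSAT, and hardness is shown by drawing $\cB_{\cX,\cC}$ in the plane, bounding the number of crossings polynomially, and eliminating each crossing with a planar crossover gadget that forces $a\leftrightarrow a'$ and $b\leftrightarrow b'$ for the two severed edges. That is the standard and correct route.

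The genuine gap is that the entire technical content of the theorem lives in the crossover gadget, and you do not construct it --- you only assert that one ``can'' be built and note that Lichtenstein exhibits one. Without an explicit gadget together with (i) a drawing of its incidence graph in a disk with $a,b,a',b'$ on the boundary in the stated cyclic order, (ii) a check over the $16$ assignments that it is satisfiable exactly when $a=a'$ and $b=b'$, and (iii) a verification that the padding to clauses of exactly three distinct literals (as required by the paper's Problem statement for \ThreeSAT) preserves both the semantics and the planarity of the augmented incidence graph, the argument is a plan rather than a proof. A second, smaller point deserving care: the four cut endpoints at a crossing are not four independent literals but two ends of a variable--clause edge and two ends of another, so you must say precisely which side keeps the original variable and which side gets the primed copy that is substituted into the far clause, and how to chain gadgets when a single edge is crossed several times. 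None of this is wrong in outline, but as written the crux is deferred to the very reference the theorem already cites.
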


\noindent Given a planar graph and one of its planar embeddings, let $\sigma_{v}$ be a clockwise cyclic permutation of the neighbors of $v$ according to the planar embedding. Thus $\sigma_{v}(i)$ returns the i-est neighbor of $v$ and $\sigma_{v}^{-1}(u)$ returns the index of the neighbor $u$ in the cyclic order.
For example, in Figure \ref{fig:incidence_graph}, we could define $\sigma_{c^4} = (x_2, x_5, x_4)$ and $\sigma_{x_5} = (c^5, c^3, c^4, c^2)$.
\begin{figure}[ht]
\begin{center}
	\begin{minipage}{4cm}
		$\cX = \{x_1, x_2, x_3, x_4, x_5\}$\newline
		$\cC = \{c^1, c^2, c^3, c^4, c^5\}$\newline
		$c^1 = x_1 \lor x_2 \lor x_3$\newline
		$c^2 = \neg x_1 \lor x_5 \lor x_2$\newline
		$c^3 = x_2 \lor \neg x_4 \lor x_5$\newline
		$c^4 = \neg x_2 \lor x_5 \lor x_4$\newline
		$c^5 = \neg x_3 \lor \neg x_2 \lor x_5$
	\end{minipage}
	\begin{minipage}{7cm}
		\centering
		\includegraphics[scale=1]{./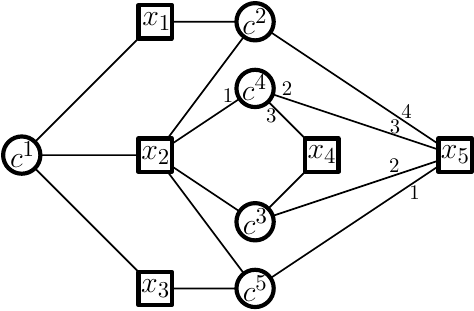}
	\end{minipage}
    \caption{Incidence graph of a \PlanarThreeSAT~instance. Here, $\sigma_{c^4} = (x_2, x_5, x_4)$ and $\sigma_{x_5} = (c^5, c^3, c^4, c^2)$}
    
	\label{fig:incidence_graph}
\end{center}
\end{figure}

\noindent Therefore, our objective is to reduce \PlanarThreeSAT~to our problem. To do so, we construct a graph from a \PlanarThreeSAT~instance such that finding an acyclic $T$-odd orientation of this graph is equivalent to finding a $True$ assignment of the \PlanarThreeSAT~instance.

\subsection{Graph construction} \label{sec:graphConstruction}
\noindent From  an instance $(\cX, \cC)$ of \PlanarThreeSAT, we construct an instance $I_{\cX, \cC}=(G(I_{\cX, \cC}), T(I_{\cX, \cC}))$ of Problem \ref{prob:PDAOP} based on a planar embedding of the incidence graph $\cB_{\cX, \cC}$ with a cyclic permutation $\sigma$ of the neighborhood of each vertex. We recall that we aim to build $I_{\cX, \cC}$ such that $G(I_{\cX, \cC})$ is a planar graph of maximum degree 3 and such that any vertex not in $T(I_{\cX, \cC})$ has degree 2. Essentially, $G(I_{\cX, \cC})$ has the same structure as $\cB_{\cX, \cC}$, where each variable vertex $x_i$ is replaced by a planar gadget $X_i$ with  $10.d(x_i)$ vertices and each clause vertex $c^j$ is replaced by a planar gadget $C^j$ with $12$ vertices. To guarantee that $G(I_{\cX, \cC})$ is also planar, we connect these gadgets as in the planar embedding of $\cB_{\cX, \cC}$ where each edge of $\cB_{\cX, \cC}$ corresponds to two parallel edges in $G(I_{\cX, \cC})$.

\begin{itemize}
	\item We construct the {\bf variable gadget} $X_i$ using a series of smaller gadgets.\par
	We use the term {\bf base gadget} to refer to the graph $M$ on $10$ vertices composed of a path on $6$ vertices with endpoints $u,\hat{u}$ plus a disjoint path on $4$ vertices with endpoints $s, t$. These two paths are linked with $4$ arcs as shown in Figure \ref{fig:M}. Finally, $T(M) = V(M)-\hat{u}$.

	To construct the variable gadget $X_i$, we combine $d(x_i)$ copies $M_i^{1}, \dots, M_i^{d(x_i)}$ of $M$ as follows. We add a superscript $k$ to the vertices of $M_i^{k}$ and a subscript $i$ to every vertex of $X_i$. Additionally, for $1 \leq k \leq d(x_i)$, we add an edge between the vertices $t_i^k$ and $s_i^{k+1}$ cyclically (i.e. with $s_i^{d(x_i)+1} = s_i^1$) as shown in Figure \ref{fig:X_i}. We define $T(X_i)$ to be the union of the $T(M_i^k)$. 

	\begin{figure}[ht]
	\centering
	\begin{subfigure}[b]{0.38\textwidth}
		\centering
		\ig[scale=0.7]{./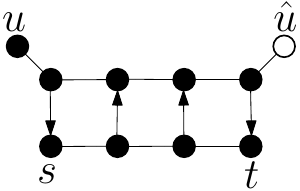}
	%	\caption{Base gadget $M$}
	\caption{}
		\label{fig:M}
	\end{subfigure}
	\begin{subfigure}[b]{0.6\textwidth}
		\centering
		\ig[scale=0.7]{./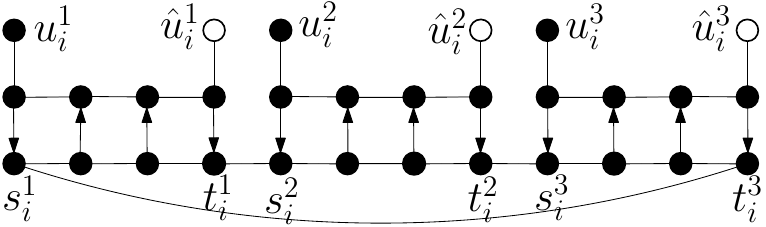}
		%\caption{Gadget $X_i$ with $d(x_i) = 3$}
		\caption{}
		\label{fig:X_i}
	\end{subfigure}
	\caption{Elements used to build variable gadgets: (a) Base gadget $M$; (b) Gadget $X_i$ with $d(x_i) = 3$.}
    \end{figure}

	\item For each clause vertex $c^j$ of $\cB_{\cX,\cC}$ with its neighboring vertices $\sigma_{c^j}(1)$, $\sigma_{c^j}(2)$, $\sigma_{c^j}(3)$ in $\cB_{\cX,\cC}$, we construct the {\bf clause gadget} $C^j$ as illustrated in Figure \ref{fig:clauseGadget}. The vertex set of $C^j$ is partitioned into $W_{C^j}\cup R_{C^j}$ where the graph induced by $W_{C^j}$ is the cycle on six vertices $w_{1}^j, \hat{w}_{1}^j, w_{2}^j, \hat{w}_{2}^j, w_{3}^j, \hat{w}_{3}^j$ with edges $w_{k+1}^j \hat{w}_{k+1}^j$ and $\hat{w}_{k+1}^j w_{k+2\bmod [3]}^j$ for all $k=0,1,2$, and the graph induced by $R_{C^j}$ is the stable set $v_{1}^j, \hat{v}_{1}^j, v_{2}^j, \hat{v}_{2}^j, v_{3}^j, \hat{v}_{3}^j$. Additionally, $C^j$ contains the matching $\delta(W_{C^j})$ defined by edges $v_{k}^j w_{k}^j$ and $\hat{v}_{k}^j \hat{w}_{k}^j$ for all $k=1,2, 3$. \par
	All vertices in $W_{C^j}$ belong to $T(C^j)$. For $k=1,2,3$, the vertices $v_{k}^j$ and $\hat{v}_{k}^j$ belong to $T(C^j)$ if and only if the $k$-th neighboring vertex of $c^j$ is expressed positively in the clause $c^j$.

	\begin{figure}[ht]
		\centering
		\includegraphics[scale=1]{./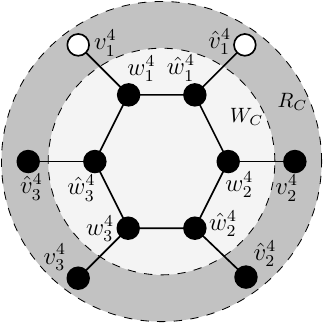}
		\caption{Gadget $C^4$ associated with the clause $c^4 = \neg x_2 \lor x_5 \lor x_4$ with $\sigma_{c^4} = (x_2, x_5, x_4)$}
		\label{fig:clauseGadget}
	\end{figure}

	\item For each edge $(x_i, c^j)$ in $\cB_{\cX,\cC}$, we add the two parallel edges $(u_i^{\sigma_{x_i}^{-1}(c^j)}, \hat{v}_{\sigma_{c^j}^{-1}(x_i)}^j)$ and $(\hat{u}_i^{\sigma_{x_i}^{-1}(c^j)}, v_{\sigma_{c^j}^{-1}(x_i)}^j)$ to the graph. Recall that $\sigma_{v}^{-1}(u)$ returns the index of a neighbor $u$ in the neighborhood of $v$ in $\cB_{\cX, \cC}$. This link between a clause and a variable gadget is illustrated in Figure \ref{fig:graphConstruction}.

	\item Finally we define $T(I_{\cX, \cC})$ to be the union of the $T(X_i)$ and $T(C^j)$ for all $i\in \{1,...,n\}$ and $j\in \{1,...,m\}$.
\end{itemize}

\begin{figure}[ht]
	\centering
	\begin{minipage}{4cm}
		$\cX = \{x_1, x_2, x_3, x_4, x_5\}$\newline
		$\cC = \{c^1, c^2, c^3, c^4, c^5\}$\newline
		$c^1 = x_1 \lor x_2 \lor x_3$\newline
		$c^2 = \neg x_1 \lor x_5 \lor x_2$\newline
		$c^3 = x_2 \lor \neg x_4 \lor x_5$\newline
		$c^4 = \neg x_2 \lor x_5 \lor x_4$\newline
		$c^5 = \neg x_3 \lor \neg x_2 \lor x_5$
	\end{minipage}
	\begin{minipage}{12cm}
		\centering
		\ig[scale=0.75]{./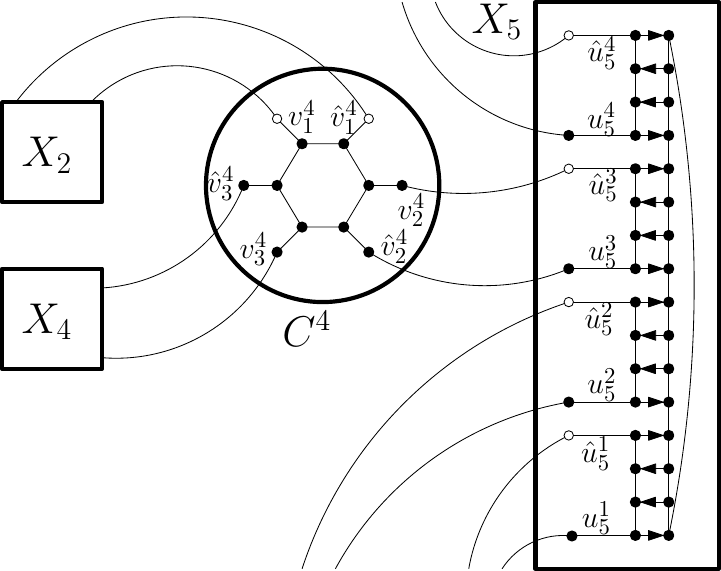}
	\end{minipage}

	\caption{\parbox[t]{0.60\textwidth}{\centering%
		Construction of part of $I_{\cX, \cC}$ from a \PlanarThreeSAT~instance $(\cX, \cC)$. Some of the gadgets are simplified.}
	}
	\label{fig:graphConstruction}

\end{figure}

\noindent Given $(\cX, \cC)$ a \PlanarThreeSAT~instance, we call $I_{\cX,\cC}$ its corresponding instance of Problem \ref{prob:PDAOP}. We will now prove that an acyclic $T(I_{\cX,\cC})$-odd orientation of $I_{\cX,\cC}$ exists if and only if there exists a $True$ assignment of $(\cX,\cC)$.

%\newpage
\subsection{Property of the variable gadget} \label{sec:variableGadget}
\noindent The purpose of the variable gadget $X$ is to ensure that all edges in $\delta(X)$ are directed identically in any acyclic orientation $T$-odd on $X$, i.e. $\delta(X)$ in $O(\Gamma_X)$, is uniform.\newline

\noindent Let us first prove some elementary properties of $M$:
\begin{lemma} \label{lem:M4}
	Let $(\cX, \cC)$ be a \PlanarThreeSAT~instance and $I_{\cX,\cC}=(G,T)$ be the corresponding instance of Problem \ref{prob:PDAOP}. Let $M$ be any base gadget that appears in $G$.

    \noindent $\Gamma_M$ admits exactly two acyclic orientations $T$-odd on $M$.\newline
\end{lemma}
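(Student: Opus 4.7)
My strategy is to use parity propagation, starting from the four internal pre-directed arcs of $M$ and the boundary of $M$, to show that an orientation of $\Gamma_M$ that is $T$-odd on $M$ is forced up to a single binary choice, and then to verify acyclicity of both resulting candidates by inspection.

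First I would describe $\Gamma_M$ explicitly: its edges are the five undirected edges of the $6$-path from $u$ to $\hat u$, the three undirected edges of the $4$-path from $s$ to $t$, the four internal pre-directed arcs between these two paths, and the six boundary edges in $\delta(M)$ (one each at $s$ and $t$ coming from the cyclic linking inside the variable gadget $X_i$, and two parallel edges each at $u$ and $\hat u$ joining $M$ to a clause gadget $C^j$). The condition of being $T$-odd on $M$ requires each of the nine vertices in $V(M)\setminus\{\hat u\}$ to have odd in-degree in $\Gamma_M$ and $\hat u$ to have even in-degree.

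Next I would propagate the parity constraints through $M$. Every internal vertex of either of the two paths has degree at most three in $\Gamma_M$ (two path edges plus possibly one endpoint of a pre-directed arc), and the arcs already contribute known amounts to their in-degrees. Starting from the low-degree boundary vertex $s$ and walking along the $4$-path to $t$, the parity at each successive vertex combined with the fixed arcs forces the orientation of the next undirected edge; the same propagation then cascades via the four internal arcs into the $6$-path, forcing the five edges from $u$ to $\hat u$ and the four parallel boundary edges at $u,\hat u$. I would argue that once the direction of the boundary edge at $s$ is chosen, every other edge of $\Gamma_M$ has its orientation forced, hence there are at most two candidate orientations $T$-odd on $M$. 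The role of $\hat u$ as the unique white vertex is crucial here: the \emph{even} in-degree requirement at $\hat u$ is precisely what keeps both of these candidates consistent with the parity conditions, rather than eliminating one of them.

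Finally I would check acyclicity of the two candidates. In each one, the $4$-path and the $6$-path are both oriented monotonically and the four internal arcs all go consistently from one path to the other, so no directed cycle can use them; the six boundary edges each have one endpoint outside $V(M)$, hence they cannot close a cycle within $\Gamma_M$ either. The main obstacle I anticipate is the bookkeeping at the four arc endpoints, where the interaction between the already-known arc orientations and the parity being propagated along the two paths must be tracked carefully so as to confirm that the propagation is indeed fully determined and consistent; the acyclicity check itself reduces to a short inspection of the two explicit candidate orientations.
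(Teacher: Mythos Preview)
Your overall strategy (parity propagation followed by an acyclicity check) is the same as the paper's, but there are two concrete errors that make the argument, as written, incorrect.

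First, the boundary of $M$ has four edges, not six. In the construction, the ``two parallel edges'' replacing an incidence-graph edge $x_ic^j$ are one edge from $u$ to a $\hat v$-vertex and one edge from $\hat u$ to a $v$-vertex; so $u$ and $\hat u$ each receive exactly one boundary edge, and $s,t$ one each from the cyclic linking inside $X_i$. Consequently every vertex of $M$ is incident to exactly two undirected edges in $\Gamma_M$ (plus possibly some pre-directed arcs), which is what makes the parity propagation clean.

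Second, and more importantly, parity propagation does \emph{not} cascade from the $4$-path into the $6$-path via the pre-directed arcs. The arcs are already oriented, so at a $6$-path vertex like $a$ the arc contributes a fixed amount to the in-degree and the parity condition only links the two undirected path edges $ua$ and $ab$ to each other --- it says nothing about the choice made on the $4$-path. Parity alone therefore yields two independent binary choices (one for the $6$-path chain, one for the $4$-path chain), hence four candidate orientations that are $T$-odd on $M$, not two. What cuts four down to two is acyclicity: two of the four candidates create a directed cycle along one of the short mixed cycles $(s,e,b,a)$ or $(t,f,c,d)$ formed by path edges and arcs, and this forces the $6$-path choice once the $4$-path choice is made. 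In the paper this is exactly the step ``$\dir{se}\in O \Rightarrow \dir{ab}\in O$'' and ``$\ldir{ft}\in O \Rightarrow \dir{cd}\in O$''. Your plan to use acyclicity only as a final verification of the ``two'' candidates misses that it is the mechanism producing the reduction to two in the first place; the remark about $\hat u$ keeping both candidates consistent is not what is doing the work here.
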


\begin{proof}
    Notice that $V(\Gamma_M)$ contains all vertices of $V(M)$ plus four vertices of $V(G)\setminus V(M)$ adjacent to $u, \hat{u}, s$ and $t$. Let $O$ be an acyclic $T$-odd on $M$ orientation of $\Gamma_M$. Because each vertex has to satisfy the condition $T$-odd on $M$ and is adjacent to exactly two undirected edges, choosing the orientation of one of those edges will force the orientation of the other. An as example, because $a\in T$ and $\dir{as} \in O$, we have $\dir{ua} \in O \iff \dir{ab} \in O$. Repeating this reasoning on the other vertices we deduce:

    $$ \dir{ua} \in O \iff \dir{ab} \in O \iff \ldir{bc} \in O \iff \dir{cd} \in O \iff \dir{d\hat{u}} \in O$$
    $$ \dir{se} \in O \iff \dir{ef} \in O \iff \dir{ft} \in O$$
    
    \noindent Furthermore, the acyclicity condition on $O$ imposed on the sequences of vertices $(seba)$ and $(tfcd)$ gives:
    $$ \dir{se} \in O \Rightarrow \dir{ab} \in O$$
    $$ \ldir{ft} \in O \Rightarrow \dir{cd} \in O$$
    
    \noindent From those constraints, we get only the two possible acyclic $T$-odd on $M$ orientations of $\Gamma_M$ given in Figure \ref{fig:M_orientations}.
\end{proof}

\begin{figure}[ht]
	\centering
	\begin{minipage}{8cm}
		\centering
		\ig[scale=1, page=1]{./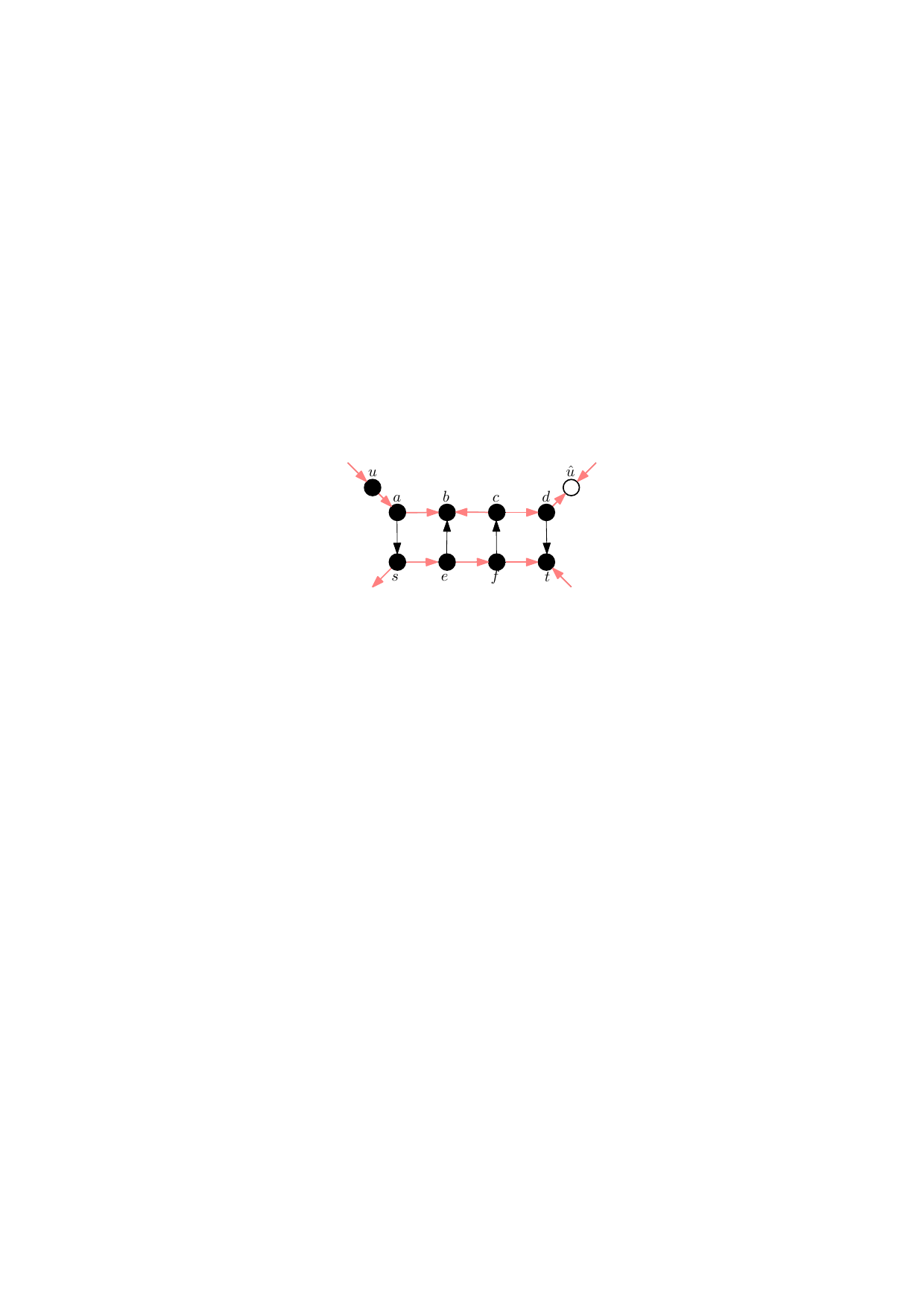}
	\end{minipage}
	\begin{minipage}{8cm}
		\centering
		\ig[scale=1, page=2]{./pics/M_orientations.pdf}
	\end{minipage}

	\caption{The two possible acyclic orientations of $\Gamma_M$}
	\label{fig:M_orientations}
\end{figure}

\noindent This leads to the following corollary:

\begin{corollary} \label{cor:variableGadget}
	Let $(\cX, \cC)$ be a \PlanarThreeSAT~instance and $I_{\cX, \cC} =(G,T)$ be the corresponding instance of Problem \ref{prob:PDAOP}. Let $X$ be any variable gadget that appears in $G$.\newline

	\noindent $\Gamma_X$ admits exactly two acyclic orientations $T$-odd on $X$.\newline
	\noindent Additionally, one satisfies $\delta(X) = \delta^+(X)$ and the other satisfies $\delta(X) = \delta^-(X)$.
\end{corollary}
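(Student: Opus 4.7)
The plan is to lift Lemma \ref{lem:M4} to $X$ using its cyclic structure. By that lemma, for each base gadget $M^k$ inside $X$ there are exactly two acyclic $T$-odd on $M^k$ orientations of $\Gamma_{M^k}$, which I will call states $+$ and $-$ (related by reversing every arc that comes from an originally undirected edge). The first step is to read off from Figure \ref{fig:M_orientations} how the four boundary edges of $M^k$ behave in each state: at the degree-$2$ vertices $u^k, \hat{u}^k$ the $T$-odd constraint forces the boundary edge to continue the interior orientation uniformly (into $M^k$ in state $+$, out of $M^k$ in state $-$), while at $s^k, t^k$ the pre-directed arcs internal to $M^k$, combined with the $T$-odd constraint, pin down the boundary orientation in each state.

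The second step is the key local check: the edge $t^k s^{k+1}$ is a boundary edge of both $\Gamma_{M^k}$ and $\Gamma_{M^{k+1}}$, so the two states must agree on its orientation. A direct case analysis of the four combinations shows that agreement holds exactly when both copies are in the same state, i.e., $+/+$ or $-/-$. Since $M^1, \ldots, M^{d(x_i)}$ are glued cyclically, this propagates and forces a common state on all copies, leaving at most two candidate acyclic $T$-odd on $X$ orientations of $\Gamma_X$.

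The third step is to verify that both candidates are actually acyclic and realize the claimed uniformity of $\delta(X)$. Any directed cycle in $\Gamma_X$ must repeatedly leave and enter the copies via connecting arcs at $s^k$ or $t^k$, since the outside endpoints of the boundary arcs at $u^k, \hat{u}^k$ have degree $1$ in $\Gamma_X$ and each $\Gamma_{M^k}$ is itself acyclic. In state $+$, however, $t^k$ has only incoming arcs (the two internal ones and the incoming boundary arc from $s^{k+1}$), so no cycle can exit $M^k$ through $t^k$; following the only available exit at $s^k$ lands us at $t^{k-1}$, which is again a sink. The symmetric argument applies in state $-$ with the roles of $s$ and $t$ swapped. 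For the boundary, $\delta(X)$ consists exactly of the boundary edges at the $u^k$ and $\hat{u}^k$ (the $s^k, t^k$ boundary edges are absorbed into $X$), and by the first step these are all oriented into $X$ in state $+$ and all out of $X$ in state $-$, yielding $\delta(X) = \delta^-(X)$ and $\delta(X) = \delta^+(X)$ respectively.

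The main obstacle is the consistency check in step two. One must carefully unpack the orientation at $s^k$ and $t^k$ given by Lemma \ref{lem:M4} in each state, taking into account the pre-directed arcs incident to these vertices, and then verify from both $\Gamma_{M^k}$ and $\Gamma_{M^{k+1}}$ that only matching states assign the same orientation to the shared edge $t^k s^{k+1}$. Once this local rigidity is established, the cyclic propagation of states and the acyclicity check are short.
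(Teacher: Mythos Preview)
Your proof is correct and follows essentially the same approach as the paper: both arguments invoke Lemma~\ref{lem:M4} on each base gadget, observe that the orientation of the connecting edge $t^k s^{k+1}$ forces adjacent copies into the same state (the paper phrases this as ``choosing the orientation of any edge forces the orientation of each $M^j$''), and certify global acyclicity by noting that in each of the two resulting orientations either every $s^k$ or every $t^k$ is a sink ($d^+ = 0$). Your write-up simply unpacks in detail what the paper compresses into ``one can check'' and a reference to the figures.
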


\begin{proof}
    Similarly to the previous result, one can check that the only orientations of $\Gamma_X$ being both acyclic and $T$-odd on $X$ are the two depicted in Figure \ref{fig:X_orientation_out} and \ref{fig:X_orientation_in}. Choosing the orientation of any edge forces the orientation of each $M^j$. It also preserves acyclicity on the overall orientation as, for all $j$, either $s^j$ or $t^j$ has  $d^+ = 0$.
\end{proof}

\noindent We denote by $O(\Gamma_X)_{out}$  (resp. $O(\Gamma_X)_{in}$) the acyclic orientation of $\Gamma_X$ that is $T$-odd on $X$ and satisfy $\delta(X) = \delta^+(X)$ (resp. $\delta(X) = \delta^-(X)$).

\begin{figure}[ht]
	\centering
	\begin{minipage}{8cm}
		\centering
		\ig[scale=0.5, page=1]{./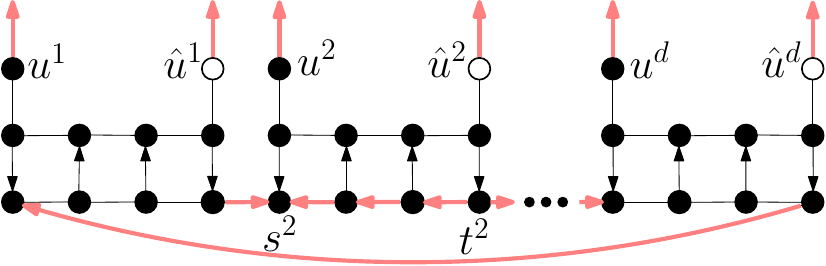}
		\caption{$O(\Gamma_X)_{out}$}
		\label{fig:X_orientation_out}

	\end{minipage}
	\begin{minipage}{8cm}
		\centering
		\ig[scale=0.5, page=2]{./pics/X_orientations.pdf}
		\caption{$O(\Gamma_X)_{in}$}
		\label{fig:X_orientation_in}

	\end{minipage}
\end{figure}

\FloatBarrier

\begin{figure}[H]
	\centering
	\begin{minipage}[t]{10.2cm}
	\vspace{0pt}
	\subsection{Property of the clause gadget} \label{sec:clauseGadget}

	In this section, we are going to look at clause gadgets $C$ only as they appear in the construction detailed earlier. Thus, to simplify the notations, the neighbors of $v_i, \hat{v}_i$ in $\Gamma_C\setminus V(C)$ are renamed $\gamma_{v_i}, \gamma_{\hat{v}_i}$ instead of, respectively $u_i^{\sigma_{c}^{-1}(x_i)}, \hat{u}_i^{\sigma_{c}^{-1}(x_i)}$, as shown in Figure \ref{fig:Noted_C} and we will abusively say that they belong to the variable gadget $X_i$ instead of $X_{\sigma_{c}^{-1}(x_i)}$. Recall also that $W_C=\{w_1, \hat{w}_1, w_2, \hat{w}_2, w_3, \hat{w}_3\}$ is the vertex set of $C$ inducing a cycle.
	\end{minipage}
	\begin{minipage}[t]{6cm}
    	\centering
		\vspace{-\baselineskip}
    	\includegraphics[scale=1]{./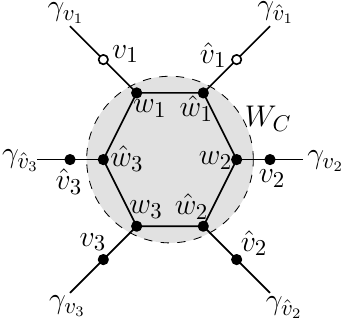}
    	\caption{Notations for $C$}
    	\label{fig:Noted_C}
	\end{minipage}
\end{figure}

\begin{lemma}\label{lem:clauseGadget}
    Let $(\cX, \cC)$ be a \PlanarThreeSAT~instance and $I_{\cX, \cC} = (G,T)$ be the corresponding instance of Problem \ref{prob:PDAOP}.
    Then, there exists a $T$-odd orientation $O$ of $I_{\cX, \cC}$ such that $O(\Gamma_{X_i})$ is acyclic for all $i \in \{1, \dots, n\}$.\newline
    
    \noindent Furthermore, $O$ is an acyclic orientation if and only if it satisfies $\delta^-(W_C) \neq \emptyset$ for all clause gadgets $C$.
\end{lemma}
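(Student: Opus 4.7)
The plan is to construct $O$ gadget by gadget, then show that directed cycles of $O$ can only live inside the induced $6$-cycle on $W_C$ for a single clause gadget. First, for each variable gadget $X_i$ I would apply Corollary~\ref{cor:variableGadget} to pick freely one of the two acyclic orientations of $\Gamma_{X_i}$ that are $T$-odd on $X_i$; this orients every edge of $\delta(X_i)$ uniformly. In each clause gadget $C$, this fixes the variable-side endpoint of every matching edge $v_k w_k$, $\hat v_k \hat w_k$; the $T$-odd condition at the degree-$2$ vertices $v_k,\hat v_k$ then forces the orientation of the matching edges themselves. It remains to orient the six edges of the $6$-cycle on $W_C$ so that every $w\in W_C$ has odd in-degree. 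A short parity count shows that $|\delta^-(W_C)|$ is always even under these forced choices, because the two matching edges from $C$ to a given variable gadget $X_i$ are forced to be oriented the same way relative to $W_C$; this is precisely the consistency condition for the six parity equations on the $6$-cycle, so a valid orientation of $W_C$ exists and $O$ can be built.

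For the second statement, I would first argue that any directed cycle of $O$ is confined to the $W_C$ of a single clause gadget. By Corollary~\ref{cor:variableGadget}, $\delta(X_i)$ is uniform in $O$, so a directed cycle using any arc of $\delta(X_i)$ would need both an entering and a leaving arc at $X_i$, which is impossible; and $O(\Gamma_{X_i})$ is itself acyclic by construction. Hence no directed cycle of $O$ can touch a variable gadget, and cycles live inside the union of the clause gadgets. Each $R_C$-vertex has degree $2$ in $G$, with one neighbour in $W_C$ and the other in a variable gadget, so it has degree $1$ inside $C$; no cycle can therefore pass through $R_C$. Every directed cycle of $O$ is thus contained in the $6$-cycle induced by $W_C$ for some single clause gadget $C$.

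Finally, I would characterise when such a cycle occurs. A directed cycle inside the $6$-cycle $W_C$ must orient all six of its edges consistently around the cycle and so contributes exactly one incoming arc at each $w\in W_C$. Since $w\in T$ and its only other incident edge is the matching edge to $R_C$, the $T$-odd parity at $w$ forces that matching edge to contribute an even number to $d^-(w)$, hence zero: it must leave $W_C$. Consequently the directed $6$-cycle on $W_C$ appears in $O$ if and only if every matching edge leaves $W_C$, i.e.\ $\delta^-(W_C)=\emptyset$. Combining with the confinement above, $O$ is acyclic if and only if $\delta^-(W_C)\neq\emptyset$ for every clause gadget $C$.

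The main obstacle I expect is the parity check in the construction step: one must verify carefully that the forced matching orientations always yield $|\delta^-(W_C)|\equiv 0 \pmod{2}$, since without this the $6$-cycle on $W_C$ cannot be oriented in a way compatible with $T$-odd and the whole construction collapses. The structural arguments about cycle confinement and about the directed $6$-cycle then follow cleanly from the boundary-uniformity of variable gadgets and the simple topology of clause gadgets.
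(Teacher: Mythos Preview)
Your proposal is correct and follows essentially the same approach as the paper: orient the variable gadgets via Corollary~\ref{cor:variableGadget}, force the matching edges at the degree-$2$ vertices of $R_C$, then orient the $6$-cycle on $W_C$ and use boundary uniformity to confine any directed cycle to some $W_C$. The only presentational difference is that you justify the existence of the $6$-cycle orientation by a parity count (showing $|\delta^-(W_C)|$ is always even) and argue the acyclicity characterisation directly, whereas the paper handles both by enumerating the four boundary patterns of $\delta(W_C)$ up to rotation and appealing to a figure.
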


\begin{proof}
From Corollary \ref{cor:variableGadget}, we know that each variable gadget $\Gamma_{X_i}$ has an acyclic orientation $T$-odd on $X_i$. Let $O_i$ be such an orientation for all $i \in \{1, \dots, n\}$. We will show that there exists an orientation $O$ of $I_{\cX, \cC}$ such that $O_i \subset O$ for all $i$ and $O(\Gamma_{C^j})$ is $T$-odd on $C^j$ for all $j \in \{1, \dots, m\}$.\newline

\noindent Let $C^j$ be any clause gadget. We construct an orientation $O^j$ of $\Gamma_{C^j}$ that is $T$-odd on $C^j$ and such that $O_i(\delta(X_i,C^j)) = O^j(\delta(X_i,C^j))$.

\noindent First, observe that $\gamma_{v_i}$ and $\gamma_{\hat{v}_i}$ belong to the same variable gadget $X_i$. By Corollary \ref{cor:variableGadget}, either both $\dir{\gamma_{v_i} v_i}$ and $\dir{\gamma_{\hat{v}_i} \hat{v}_i}$ belong to $O_i$, or both $\overleftarrow{\gamma_{v_i} v_i}$ and $\overleftarrow{\gamma_{\hat{v}_i} \hat{v}_i}$ belong to $O_i$.Then, we fix $\dir{\gamma_{v_i} v_i}$ and $\dir{\gamma_{\hat{v}_i} \hat{v}_i}$ in $O^j$ if and only if they belong to $O_i$.

\noindent Since $v_i$ and $\hat{v}_i$ either both belong to $T$ or both do not, we ensure the in-degree condition is respected in $O^j$ by directing $v_i w_i$ and $\hat{v}_i \hat{w}_i$ as follows:
\begin{itemize}
    \item If $\dir{\gamma_{v_i} v_i}, \dir{\gamma_{\hat{v}_i} \hat{v}_i} \in O^j$ and $v_i, \hat{v}_i \in T$, then include $\dir{v_i w_i}$ and $\dir{\hat{v}_i \hat{w}_i}$ in $O^j$.
    \item If $\overleftarrow{\gamma_{v_i} v_i}, \overleftarrow{\gamma_{\hat{v}_i} \hat{v}_i} \in O^j$ and $v_i, \hat{v}_i \notin T$, then include $\dir{v_i w_i}$ and $\dir{\hat{v}_i \hat{w}_i}$ in $O^j$.
\end{itemize}
Thus, up to rotation, $\delta(W_{C^j})$ can be directed in exactly four ways in $O^j$ (illustrated with red arcs, see $a_0,a_1,a_2,a_3$ in Figure \ref{fig:C_orientation}). In each of those cases, there are exactly two possible ways to direct the remaining edges of $E(C^j)$ to obtain an orientation $T$-odd on $C^j$ (illustrated with blue arcs in Figure \ref{fig:C_orientation}). Let $O^j$ be any of those two orientations.\newline
Now, let $O = \bigcup_{i=1}^{n} O_i \cup \bigcup_{j=1}^{m} O^j$. We have that $O$ is an orientation of $I_{\cX, \cC}$ since $O_i(\delta(X_i,C^j)) = O^j(\delta(X_i,C^j))$ for all $i \in \{1, \dots, n\}$ and $j \in \{1, \dots, m\}$. And $O$ is $T$-odd since $O_i$ and $O^j$ are $T$-odd on $X_i$ and $T$-odd on $C^j$ respectively, for all $i \in \{1, \dots, n\}$ and $j \in \{1, \dots, m\}$.\newline

\noindent Also, notice that in $I_{\cX, \cC}$, there is no edge between two clause gadgets. Each variable gadget is directed in an acyclic $T$-odd way in $O$ and their boundary is uniform by Corollary \ref{cor:variableGadget}. Thus, if $O$ ever contains a directed cycle it must be included in some clause gadget $C^j$. But one can notice in Figure \ref{fig:C_orientation} that $O(C^j)$ contains a cycle if and only if $\delta^-(W_{C^j}) = \emptyset$ (see case $a_0$). Therefore, $O$ is acyclic if and only if $\delta^-(W_{C^j}) \neq \emptyset$ for all clause gadgets $C^j$ with $j \in \{1, \dots, m\}$.
\begin{figure}[ht]
\centering
\includegraphics[scale=1.1]{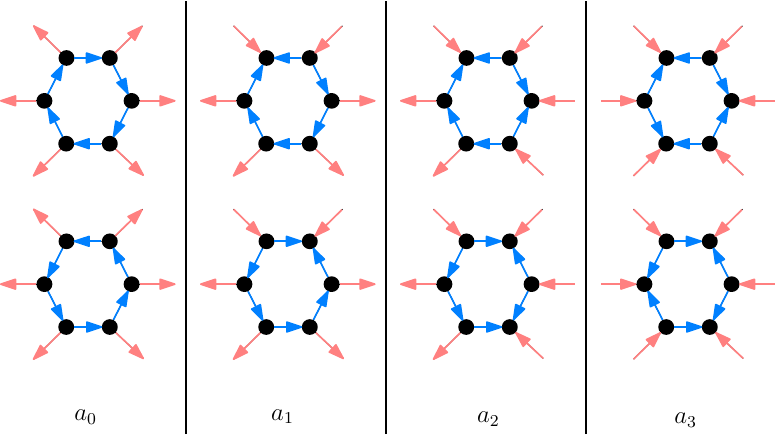}
\caption{Possible $T(I_{\cX, \cC})$-odd orientation of $E(C)$}
\label{fig:C_orientation}
\end{figure}

\end{proof}

\subsection{Proof of the reduction} \label{sec:correction}

\noindent In this section, we establish the following result expressing the reduction from \PlanarThreeSAT~to Problem \ref{prob:PDAOP}.\newline

\noindent Recall that the acyclic $T$-odd orientation problem is in \NP. Given an orientation, one can verify in polynomial time whether it satisfies the parity constraints for each vertex and whether the orientation is acyclic.

\begin{theorem}\label{thm:equiv}
	Let $(\cX, \cC)$ be a \PlanarThreeSAT~instance and $I_{\cX, \cC} = (G,T)$ be the corresponding instance of Problem \ref{prob:PDAOP}.\newline
	$I_{\cX, \cC}$ has an acyclic $T$-odd orientation if and only if  $(\cX, \cC)$ has a $True$ assignment.
\end{theorem}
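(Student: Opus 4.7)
The plan is to exploit Corollary \ref{cor:variableGadget} and Lemma \ref{lem:clauseGadget} to set up a bijection between truth assignments of $(\cX,\cC)$ and choices of acyclic $T$-odd orientations on each variable gadget, and then reduce the global acyclicity question to a local one at each clause gadget. Concretely, I would associate a candidate assignment to an orientation by declaring $x_i = True$ whenever $O(\Gamma_{X_i}) = O(\Gamma_{X_i})_{out}$ and $x_i = False$ whenever $O(\Gamma_{X_i}) = O(\Gamma_{X_i})_{in}$. Recall from the construction of the clause gadget that $v_i, \hat{v}_i \in T$ if and only if the $i$-th literal of $c^j$ appears positively, so the sign of the literal is exactly encoded in the parity class of $v_i$ and $\hat{v}_i$.

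The central technical step is a short case analysis, on the four combinations (sign of the literal $\times$ truth value of the associated variable), showing that $\dir{v_i w_i} \in O$ precisely when the $i$-th literal of $c^j$ is satisfied under the assignment. For example, if $x_i = True$ and the literal is positive, Corollary \ref{cor:variableGadget} gives $\dir{\gamma_{v_i} v_i} \in O$, and since $v_i$ has degree $2$ and belongs to $T$, the parity constraint forces $\dir{v_i w_i} \in O$; the remaining three cases are handled identically, and in the two unsatisfied cases one gets $\dir{w_i v_i} \in O$ instead. By the same argument applied to $\hat v_i$ one obtains the analogous equivalence for $\dir{\hat{v}_i \hat{w}_i}$. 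Consequently, $\delta^-(W_{C^j}) \neq \emptyset$ if and only if $c^j$ contains at least one satisfied literal.

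With this equivalence in hand, both directions of the theorem follow mechanically. For the $(\Leftarrow)$ direction, given a satisfying assignment I would orient each $\Gamma_{X_i}$ as $O(\Gamma_{X_i})_{out}$ or $O(\Gamma_{X_i})_{in}$ according to the rule above and apply Lemma \ref{lem:clauseGadget} to extend this to a $T$-odd orientation $O$ of $I_{\cX,\cC}$; by the equivalence, $\delta^-(W_{C^j}) \neq \emptyset$ for every $j$, so the same lemma yields acyclicity. For the $(\Rightarrow)$ direction, given an acyclic $T$-odd orientation $O$, Corollary \ref{cor:variableGadget} forces $O(\Gamma_{X_i}) \in \{O(\Gamma_{X_i})_{out}, O(\Gamma_{X_i})_{in}\}$ for every $i$, which defines an assignment of $\cX$; Lemma \ref{lem:clauseGadget} then gives $\delta^-(W_{C^j}) \neq \emptyset$ for every $j$, and the equivalence shows the assignment satisfies every clause. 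The only real obstacle is the case analysis linking $\dir{v_i w_i}$ to literal satisfaction; once it is pinned down, the rest of the argument is bookkeeping made possible by the two preparatory results.
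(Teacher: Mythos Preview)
Your proposal is correct and follows essentially the same route as the paper: both set up the correspondence $\alpha(x_i)=True \Leftrightarrow O(\Gamma_{X_i})=O(\Gamma_{X_i})_{out}$, use the degree-$2$ parity constraint at $v_i^j$ (and $\hat v_i^j$) to show that $\dir{v_i^j w_i^j}\in O$ iff the $i$-th literal of $c^j$ is satisfied, and then invoke Corollary~\ref{cor:variableGadget} and Lemma~\ref{lem:clauseGadget} to translate ``$\alpha$ satisfies every clause'' into ``$\delta^-(W_{C^j})\neq\emptyset$ for all $j$'' and hence into acyclicity. The only cosmetic difference is that the paper packages the argument via a single notion of \emph{compatible} pairs $(\alpha,O)$, whereas you spell out the two directions separately; the content is the same.
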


\begin{proof}
\noindent Let $(\cX, \cC)$ be a \PlanarThreeSAT~instance and $I_{\cX, \cC} = (G,T)$ be the corresponding instance of Problem \ref{prob:PDAOP}.\newline

\noindent Let $\alpha$ be an assignment of $(\cX, \cC)$ and $O$ be a $T$-odd orientation of $I_{\cX, \cC}$ such that $O(\Gamma_{X_i})$ is acyclic for all $i\in \{1,...,n\}$. We say that $\alpha$ (or respectively $O$) is {\bf compatible} to $O$ (respectively $\alpha$) when, for all $i\in \{1,...,n\}$, we have that:
$$\alpha(x_i)= True \ \hbox{if and only if} \ O(\Gamma_{X_i})=O(\Gamma_{X_i})_{out}.$$

\noindent  Observe that, by Corollary \ref{cor:variableGadget}, if $\alpha$ and $O$ are compatible, then we also have, for all $i=1,...,n$, that: $$\alpha(x_i)= False \ \hbox{if and only if} \ O(\Gamma_{X_i})=O(\Gamma_{X_i})_{in}.$$

\noindent Notice also, from Corollary \ref{cor:variableGadget}, that any $T$-odd orientation $O$ of $I_{\cX, \cC}$ such that $O(\Gamma_{X_i})$ is acyclic for all $i\in \{1,...,n\}$ has a compatible assignment of $(\cX, \cC)$, and from Lemma \ref{lem:clauseGadget}, that any assignment of $(\cX, \cC)$ has a compatible orientation of $I_{\cX, \cC}$.\newline

\noindent Moreover, by Lemma \ref{lem:clauseGadget}, a $T$-odd orientation $O$ of $I_{\cX, \cC}$ such that $O(\Gamma_{X_i})$ is acyclic for all $i\in \{1,...,n\}$ is acyclic if and only if $O(\Gamma_{C^j})$ satisfies $\delta^-(W_{C^j}) \neq \emptyset$ for all $j \in \{1, \dots, m\}$.

\noindent And, since a $True$ assignment of $(\cX, \cC)$ is an assignment which satisfies each clause in $\cC$, to prove Theorem \ref{thm:equiv}, it is enough to check that for any $\alpha$ and $O$ compatible and any clause $c^j$, we have that: $$\alpha \ \hbox{satisfies} \ c^j \ \hbox{if and only if} \ \delta^-(W_{C^j}) \neq \emptyset \ \hbox{in} \ O(I_{\cX, \cC}).$$

\noindent Let $c^j\in \cC$. Without loss of generality, one may assume that $c^j= l_1\vee l_2\vee l_3$ with either $l_i=x_i$ or $l_i=\neg x_i$, for $i=1, ...,3$. By definition of the disjunction $\vee$, we have:

$$c^j \text{ is  satisfied by }\alpha \iff \exists l_i\in c^j, \begin{cases}
l_i=x_i \text{ and } \alpha(x_i)=True, \text{ or }\\
l_i=\neg x_i \text{ and } \alpha(x_i)=False.
\end{cases}
$$

\noindent By construction of $I_{\cX, \cC}$ and since $O$ is compatible to $\alpha$, this is equivalent to:

$$ \exists i, \begin{cases}
v^j_i \in T(I_{\cX, \cC}) \text{ and } O(\hat{u}_i^{\sigma^{-1}_{x_i}(c^j)}v^j_i)=\dir{\hat{u}_i^{\sigma^{-1}_{x_i}(c^j)}v^j_i}  \text{ or }\\
v^j_i \in V(I_{\cX, \cC})\setminus T(I_{\cX, \cC}) \text{ and } O(\hat{u}_i^{\sigma^{-1}_{x_i}(c^j)}v^j_i)=\ldir{\hat{u}_i^{\sigma^{-1}_{x_i}(c^j)}v^j_i}.
\end{cases}
$$

\noindent Since $v^j_i$ is of degree $2$, by the parity condition on the in-degree of $v^j_i$, this is equivalent to:

$$\exists i, O(v^j_iw_i^j)=\dir{v_i^jw_i^j}.$$

\noindent This is precisely $\delta^-(W_{C^j})\neq \emptyset$ in $O(I_{\cX, \cC})$. \newline
Thus, for $(O,\alpha)$ a pair of compatible orientation and assignment, $O$ is an acyclic $T$-odd orientation of $I_{\cX, \cC}$ if and only if $\alpha$ is a $True$ assignment of $(\cX, \cC)$.
\
\end{proof}

\noindent One can check that the construction of $I_{(\cX, \cC)}$ from $(\cX, \cC)$ can be done in polynomial time. Thus, our main result, Theorem \ref{thm:main}, is a direct consequence of Theorems \ref{thm:equiv} and \ref{thm:P3SAT}.

\section{Conclusion}
\noindent We proved that finding an acyclic $T$-odd orientation in a partially directed graph is \NP-complete, even when $G$ is planar of maximum degree 3, and vertices in $V(G)-T$ have degree 2.  However a polynomial time algorithm exists for trees or graphs of maximum degree $2$. So classes of graphs on which Problem \ref{prob:PDAOP} can be solved in polynomial time cannot be much more complex.\newline
We could also discuss the complexity of Problem \ref{prob:PDAOP} parameterized by the size of $T$. Unfortunately, the problem remains $\NP$-complete even when $T = \emptyset$. Indeed, from $(\cX, \cC)$ a \PlanarThreeSAT~instance and $I_{\cX, \cC} = (G,T)$ its corresponding instance of Problem \ref{prob:PDAOP}, any vertex $v\in T$ of degree $2$ could be contracted into a single edge resulting in an equivalent instance $I'_{\cX, \cC} = (G',T')$ where each $v\in V(G')$ has degree $3$ if $v\in T'$ and degree $2$ otherwise. Flipping the orientation of all edges in an acyclic $T'$-odd orientation of $I'_{\cX, \cC}$ leads to an acyclic $\emptyset$-odd orientation of $I'_{\cX, \cC}$. Thus, even finding acyclic $\emptyset$-odd orientation on partially directed planar graphs of maximum degree $3$ is \NP-complete. However, in our reduction from \PlanarThreeSAT~to Problem \ref{prob:PDAOP}, the constructed graph contains a linear number of arcs, which raises questions about the complexity of Problem \ref{prob:PDAOP} when parameterized by the number of arcs.\newline

\noindent Going back to Problem \ref{prob:AOP}, there exists a deterministic polynomial time algorithm to determine whether $G$ admits an acyclic $T$-odd orientation whenever $G$ is either planar or $3$-regular and $|T| = |V(G)|-1$ \cite{kiralyDualCriticalGraphsNotes}. But for a general $T \subset V$ the problem remains open, even if a randomized polynomial time algorithm is known \cite{szegedySymplecticSpacesEarDecomposition2006}.
Our result shows that a basic greedy algorithmic approach is not likely to exists. Overall, determining the complexity of Problem \ref{prob:AOP} on those classes is challenging as it belongs to the frontier of $\P$, $\RP$ and $\NP$ classes of problems (see Figure \ref{fig:complexity_classes}). 

\begin{figure}[ht]
	\centering
	\ig[scale=0.7, page=1]{./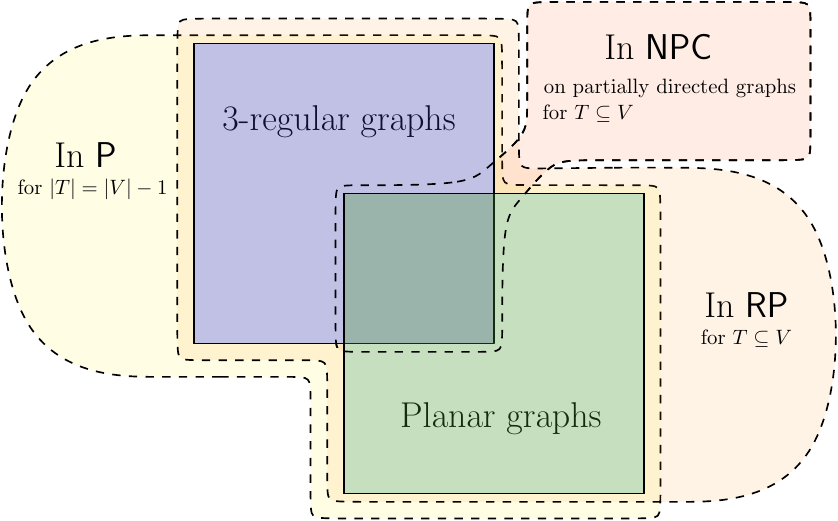}

	\caption{Complexity of Problem \ref{prob:AOP} on $3$-regular and planar graphs}
	\label{fig:complexity_classes}
\end{figure}

\bibliographystyle{elsarticle-num}
\bibliography{biblio}

\end{document}